\newtheorem{definition}{Definition}
\newtheorem{theorem}{Theorem}
\newtheorem{corollary}{Corollary}
\newtheorem{proposition}{Proposition}
\newcommand{\eg}{\emph{e.g.}~}
\newcommand{\Eg}{\emph{E.g.}~}
\newcommand{\ie}{\emph{i.e.}}
\newcommand{\hspm}{\hspace{-2pt}}
\newcommand{\hspu}{\hspace{-1pt}}
\newcommand{\bexk}{\mathcal{B}_{{\textrm{ex},k}}}
\newcommand{\bex}{\mathcal{B}_{{\textrm{ex}}}}
\newcommand{\nato}{{\mathbb{N}_0}}
\newcommand{\taut}{{_{[\tau,t]}}}
\newcommand{\akl}{\mathcal{A}_k^{-1}}
\newcommand{\ak}{\mathcal{A}_k}
\newcommand{\tautn}{{{[\tau,t]}}}
\title{\LARGE \bf
Decentralized set-valued state estimation based on\\ non-deterministic chains}
\author{N. Bajcinca, Y. Kouhi, V. Nenchev, J. Raisch 
\thanks{N. Bajcinca, Y. Kouhi, V. Nenchev and J. Raisch are with \emph{Max Planck Institute for Dynamics of Complex Technical Systems}, Sandtorstr. 1, 39106 Magdeburg, Germany, and \emph{Technische Universit\"at Berlin}, Control Systems Group, Einsteinufer 17, EN 11, 10587 Berlin, Germany. Corresponding email: \footnotesize{\emph{bajcinca@mpi-magdeburg.mpg.de.}}}
%
}
\begin{document}

\maketitle
\thispagestyle{empty}
\pagestyle{empty}


\begin{abstract}
A general decentralized computational framework for set-valued state estimation and prediction for the class of systems that accept a hybrid state machine representation is considered in this article. The decentralized scheme consists of a conjunction of distributed state machines that are specified by a decomposition of the external signal space.
%
While this is shown to produce, in general, outer approximations of the outcomes of the original monolithic state machine, here, specific rules for the signal space decomposition are devised by utilizing structural properties of the underyling transition relation, leading to a recovery of the exact state set results. By applying a suitable approximation algorithm, we show that computational complexity in the decentralized setting may thereby essentially reduce as compared to the centralized estimation scheme.

\end{abstract}


\section{Introduction}
\label{sec:introduction}
%
Set-valued state computation is often used in the analysis and synthesis of complex systems. As state sets are thereby guaranteed to contain the true state of the system, such a computational approach can be efficiently employed for the prediction of the system's behavior involving physical and measurement uncertainties. 
For instance, reachability analysis for verification of safety specifications is a typical application in this context. In hybrid systems, abstraction-based approaches naturally lead to a set-valued computational framework, see \eg \cite{Lunze2010}. Yet, the computational complexity remains often prohibitive, which has been an impetus for the increasing interest on the decentralized state estimation and prediction schemes, particularly in the area of discrete event systems with applications to failure detection and diagnosis, see \eg \cite{fabrecdc00, Xu2009}.



In this paper, a general framework for decentralized estimation and prediction for a class of hybrid and discrete event systems with a finite external signal space is considered. Therefore, initially the original signal space is decomposed into a finite number of aggregate signal spaces of a lower cardinality, which physically may be interpreted as substitution of a ``high resolution sensor'' by a finite set of ``coarser'' ones. Thereby, each of the introduced aggregate signal spaces invokes a distributed state machine by ``relabeling'' the symbols of the  monolithic state machine. However, due to the reduction in the measurement resolution, state set predictions of individual distributed state machines, and as a consequence, of the whole decentralized scheme itself, is indeed  conservative. To obviate this, the ``coarse sensors'' must be appropriately designed in the sense that the intersection of the individual computation outcomes resulting thereof, invariably leads to exact state set estimates. To this end, in this article, specific signal space decomposition rules are constructed by utilizing the structural properties of the transition relation of the underlying monolithic state machine. A simple algorithm is developed based on the concept of so-called ``non-deterministic chains'', which represent a special class of transition relations featuring inherent injective transition functions.
%
Our computational framework relates to the work \cite{wodes2010} which focuses on distributed set-valued state estimation for I/S/O machines. Yet it includes additional perspectives of decentralized computation. For instance, one of the primers here has been the development of the guidelines for the design of the decentralized scheme itself. Moreover, our framework covers the larger class of state machines that preserve injectivity in the state transition function such as non-deterministic automata or I/S/- representations with singleton output maps. Recently, this approach has been further generalized in \cite{bajc_rom:2011} to the class of systems involving  non-injective state transition functions.

The remainder of the paper is organized as follows. 
In Section~\ref{sec:monolithic}, we recall a few basic concepts from the behavioral system theory. In addition, a recursive algorithm for the computation of the set-valued state estimates and predictions is presented. Section~\ref{sec:distributed} represetns the core of the work. Here, we define the decentralized computational scheme, and introduce external signal space decomposition using specific aggregation functions. In particular, the concept of {non-deterministic chains} and the corresponding main algorithm for the signal space decomposition are presented. In Section~\ref{sec:Application}, the results are employed in a decentralized estimation setup using $\ell$-complete approximations. In addition, a comparison of the space/time complexity between the decentralized and monolithic scheme is shortly discussed. Finally, an illustrative numerical example for decentralized set-valued state estimation is included.

We use the following notation: capital letters denote signal spaces, \eg $X$, $U$, $Y$, $W$ represent the state space,
the input and output spaces, and the external signal space, respectively. The corresponding elements are denoted by greek lowercase letters, \eg $W=\{\omega_1,\ldots,\omega_m\}$. We consider the discrete time domain, hence signals, which
are denoted by lowercase letters, are sequences of symbols from the appropriate signal space, \eg $w : \nato\to W$ represents the external signal. The restriction of a signal to an interval $[\tau,t]$, $\tau,t\in\nato$, $0\leq \tau\leq t$, is denoted by $\cdot|\taut$, \eg $w|\taut=  w(\tau)\ldots w(t)$. The space of the finite sequences (strings) $w|\taut$ will be denoted as $W^\tautn= W^{t-\tau+1}$. The string $w|\taut$ will be considered as an element of $W^\tautn$, i.e. they will be represented by an $(t\hspu-\hspu\tau\hspu+\hspu 1)$-tu\-ple ordered by the time parameter. Finally, let $f$ be an arbitrary function $f$ defined on some domain $X$. If $\Xi$ is a subset of $X$, we will use the convention $f(\Xi):= \cup_{x\in \Xi} f(x)$. For singleton sets we avoid usage of brackets. 

\section{Preliminaries}
\label{sec:monolithic}

\subsection{Systems \& realizations}
\label{sec:monolithic_dynamic}
This section provides basic system and realization concepts from the behavioral perspective, see \eg \cite{Wi91}. A dynamical system $\Sigma$ is defined as a triple $(T,W,\mathcal{B})$, with \emph{time axis} $T\subseteq \mathbb{R}$, the \emph{external signal space} $W$, and the \emph{behavior} $\mathcal{B}\subseteq W^T$, where $W^T=\{w: T\to W\}$.
Throughout this article, the discussion is confined to discrete-time systems, hence $T=\mathbb{N}_0$. The external signal space is finite: $W=\{\omega_1,\ldots,\omega_m\}$. $\mathcal{B}$ then represents a set of sequences $w:\mathbb{N}_0\rightarrow W$ which are compatible with the dynamics of the system $\Sigma$. 
%
%
%
A dynamical system $\Sigma=(\mathbb{N}_0,W,\mathcal{B})$ is said to be {time invariant} if $\sigma\mathcal{B}\subseteq \mathcal{B}$, where $\sigma$ is the backwards time shift operator: $\sigma w(t):=w({t+1})$, $t \in \mathbb{N}_0$, and $\sigma\mathcal{B}:=\{\sigma w; w\in \mathcal{B}\}$; for $\tau>1$, $\sigma^\tau:=\sigma \sigma^{\tau-1}$ . 

A state machine is a tuple $P=(X,W,\Delta,X_0)$ where $X$ denotes the state space, $W$ the external signal space, $\Delta\subseteq X\times W\times X$ the transition relation, and $X_0\subseteq X$ the initial state set. If $X=\mathbb{R}^n\times D$, where $n\in\mathbb{N}_0$ and $|D|\in\mathbb{N}<\infty$, $P$ is a hybrid state machine; for $n=0$, $P$ is a finite state machine. Throughout the paper, $P$ is assumed to be non-blocking, that is for all $\xi\in X$ there exists a $\omega\in W$ such that $(\xi,\omega,\xi')\in \Delta$. Furthermore, we assume $X_0=X$. Then, $\sigma\mathcal{B}=\mathcal{B}$, implying that $\Sigma$ is time-invariant.

For systems exhibiting an input/output structure,  the external signal space $W$ can be decomposed as $W=U\times Y$, with $U$ and $Y$ being the sets of input and output symbols. Then, $P=(X,U\times Y,\Delta,X_0)$, is said to be an {I/S/- machine} if for each state $\xi\in X$ and each $\mu\in U$, there exists a $\nu\in Y$ and a $\xi^\prime\in X$, such that $(\xi,(\mu,\nu),\xi^\prime)\in\Delta$. If $\nu$ and $\xi^\prime$ are unique for all $\xi\in X$ and $\mu\in U$, $P$ is said to be an {I/S/O machine}. Note that I/S/O machines are deterministic by definition. A state machine $P=(X,W,\Delta,X_0)$ induces a state space system $\Sigma_\text{S}=(\mathbb{N}_0,W\times X,\mathcal{B}_\text{S})$, where $\mathcal{B}_\text{S}$ is referred to as the {full behavior}, and is defined as
\begin{align}
\mathcal{B}_\text{S}\hspm:=\hspm\{(w,\hspm x); (x(t),\hspm w(t),\hspm x({t\hspm+\hspm1}))\hspm\in\hspm \Delta, t\hspm\in\hspm \mathbb{N}_0,x_0\hspm\in\hspm X_0\}.
\end{align}
The external behavior $\mathcal{B}_{\text{ex}}$ of $\Sigma_\text{S}$ is then defined to be the projection of $\mathcal{B}_\text{S}$ onto $W^{\mathbb{N}_0}$, that is $\mathcal{B}_{\text{ex}}:=\mathcal{P}_W\mathcal{B}_\text{S}=\{w; \exists x\in X^{\mathbb{N}_0}, (w,x) \in \mathcal{B_\text{S}}\}$. Finally, a state machine $P=(X,W,\Delta,X_0)$ with induced external behavior $\mathcal{B}_{\text{ex}}$ is said to be a realization of a dynamical system $\Sigma=(\mathbb{N}_0,W,\mathcal{B})$ if $\mathcal{B}_{\text{ex}} = \mathcal{B}$. This will be denoted by  $P \cong \Sigma$.

\subsection{State set estimation \& prediction}
\label{sec:monolithic_setvaluedset}
Let $\mathcal{B}_\text{S}$ and $\mathcal{B}_\textrm{ex}$ be the induced full and external behavior of the state machine $P=(X,W,\Delta,X)$, respectively. Define state sets compatible to an external string $w|_{[\tau,t]}\in W^{\tautn}$ at the time instants $t$ and $t+1$ as
\begin{subequations}
\begin{align}
\label{eq:chi_defs}
\hspm\hspm\hspm\hspm\hspm\chi(\hspu w|_{[\tau,t]}\hspu)\hspm:=&\hspu\{\xi;\exists (w^\prime\hspm\hspm,x)\hspm\in\hspm\mathcal{B}_\text{S},\hspu x(t)\hspm=\hspm\xi,\hspu w^\prime|_{[\tau,t]}\hspm=\hspm w|_{[\tau,t]}\hspu\}, \\
\label{eq:rho_defs}
\hspm\hspm\hspm\hspm\hspm\rho(\hspu w|_{[\tau,t]}\hspu)\hspm:=&\hspu\{\xi;\exists (w^\prime\hspm\hspm,x)\hspm\in\hspm\mathcal{B}_\text{S},\hspu x({t\hspm+\hspu\hspm1})\hspm=\hspm\xi,\hspu w^\prime|_{[\tau,t]}\hspm=\hspm w|_{[\tau,t]}\hspu\}.
\end{align}
\end{subequations}
Both, $\chi$ and $\rho$, are families of set-valued functions $W^{\tautn}\to 2^X$ parametrized by a restriction interval $[\tau,t]$. Note that $w|\taut\in\bex|\taut\Leftrightarrow \chi(w|\taut)\neq\emptyset$. In general, more information about the past leads to more accurate state estimates. This fact is reflected by the following proposition.
\begin{proposition}
\label{thm:main_1}
Consider a machine $P=(X,W,\Delta,X)$ with induced external behavior $\mathcal{B}_{\text{ex}}$, and let $w\in \mathcal{B}_\textrm{ex}$. Then
\vspace{-4pt}
\begin{subequations}
\begin{align}
\label{eq:chis_theorem}
\chi(w|_{[0,t]}) \subseteq \chi(w|_{[1,t]})\subseteq \ldots \subseteq \chi(w|_{[t,t]}),\\
\label{eq:rhos_theorem}
\rho(w|_{[0,t]})\subseteq \rho(w|_{[1,t]})\subseteq \ldots \subseteq \rho(w|_{[t,t]}).
\end{align}
\end{subequations}
\end{proposition}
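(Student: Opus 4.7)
The plan is to observe that both $\chi$ and $\rho$ are monotone set-valued maps in the time window: shrinking the observation interval $[\tau,t]$ (by enlarging $\tau$) relaxes the constraint ``$w'$ matches $w$ on this interval'', so the set of compatible states can only grow. No dynamical content of $P$, nor the non-blocking assumption, is needed; the result is a direct consequence of the defining formulas (\ref{eq:chi_defs})--(\ref{eq:rho_defs}).

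Concretely, I would first prove the stronger statement that for all $0 \leq \tau_1 \leq \tau_2 \leq t$,
\begin{equation*}
\chi\bigl(w|_{[\tau_1,t]}\bigr) \subseteq \chi\bigl(w|_{[\tau_2,t]}\bigr),
\end{equation*}
from which (\ref{eq:chis_theorem}) follows by iterating the inclusion with $\tau_2 = \tau_1 + 1$ from $\tau_1 = 0$ up to $\tau_1 = t-1$. The argument is a one-step definition chase: pick an arbitrary $\xi \in \chi(w|_{[\tau_1,t]})$, which by (\ref{eq:chi_defs}) provides a witness $(w',x) \in \mathcal{B}_\text{S}$ with $x(t) = \xi$ and $w'|_{[\tau_1,t]} = w|_{[\tau_1,t]}$. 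Since $[\tau_2,t] \subseteq [\tau_1,t]$, restricting both sides to the smaller interval yields $w'|_{[\tau_2,t]} = w|_{[\tau_2,t]}$, so the very same pair $(w',x)$ certifies $\xi \in \chi(w|_{[\tau_2,t]})$.

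The chain (\ref{eq:rhos_theorem}) for $\rho$ is proved identically, replacing the condition $x(t) = \xi$ by $x(t+1) = \xi$ throughout; the step from (\ref{eq:rho_defs}) that imposes the sample $x(t+1) = \xi$ is unaffected by restricting the matching window for $w'$.

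I do not anticipate a genuine obstacle: the proposition is essentially a monotonicity statement in the constraint set, and the proof reduces to observing that equality of two signals on an interval implies equality on any subinterval. The only mild subtlety worth flagging is not to confuse this monotonicity direction with the intuition stated in the paragraph preceding the proposition: \emph{more past information gives sharper estimates}. That intuition is consistent with the inclusions as written, because the leftmost sets in (\ref{eq:chis_theorem})--(\ref{eq:rhos_theorem}) correspond to the longest observation windows and are therefore the smallest (most informative); the chain then relaxes the window one sample at a time.
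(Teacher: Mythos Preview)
Your proof is correct and follows essentially the same idea as the paper's: both rest on the observation that the constraint ``$w'|_{[\tau,t]} = w|_{[\tau,t]}$'' becomes weaker as $\tau$ increases, so the set of compatible states can only grow. The paper packages this by introducing the nested family $\mathcal{B}_\tau := \{w' \in \mathcal{B}_{\text{ex}} : w'|_{[\tau,t]} = w|_{[\tau,t]}\}$ and noting $\mathcal{B}_0 \subseteq \mathcal{B}_1 \subseteq \ldots \subseteq \mathcal{B}_t$, whereas you unfold the same monotonicity as an element chase on a single witness $(w',x)$; the two presentations are interchangeable.
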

\begin{proof}
Introduce the set of behaviors $\mathcal{B}_{\tau} := \{w^\prime \in \mathcal{B}_{\text{ex}}; {w}^\prime|_{[\tau,t]}=w|_{[\tau,t]}\}$, which contains all sequences in $\mathcal{B}_\textrm{ex}$ that share the same restriction $w|_{[\tau,t]}$. Then, by definition (\ref{eq:chi_defs}), $\chi(w|_{[\tau,t]})=\chi(\mathcal{B}_{\tau}):=\cup_{w^\prime\in\mathcal{B}_{\tau}}\chi(w^\prime|_{[\tau,t]})$. It is obvious that $
\mathcal{B}_{0} \subseteq \mathcal{B}_{1} \subseteq \ldots \subseteq \mathcal{B}_{t}$, which implies $\chi(\mathcal{B}_{0}) \subseteq \chi(\mathcal{B}_{1}) \subseteq \ldots \subseteq \chi(\mathcal{B}_{t})$. This proves (\ref{eq:chis_theorem}). Equation (\ref{eq:rhos_theorem}) follow by same lines of argument.
\end{proof}

We introduce now few one-step prediction expressions, which will reveal an iterative computation procedure for the state sets in \eqref{eq:chi_defs} and \eqref{eq:rho_defs}. Therefore, introduce the parametrized {state transition} function $\hat\rho_\omega: X\to 2^X$ as
\begin{subequations}
\begin{align}
\label{eq:rho_hat}
\hat\rho_\omega(\xi) &:=  \{\xi^\prime ; (\xi,\omega,\xi^\prime)\hspm\in\hspm\Delta\}.
\end{align}
For $\Omega\subseteq W$ and $\Xi\subseteq X$ in accordance with the adopted notation convention we define
\begin{align}
\label{eq:rho_hat}
\hat\rho_\Omega(\Xi) :=  \cup_{\omega\in\Omega, \xi\in\Xi}~\hat\rho_\omega(\xi).
\end{align}
The predicted states in $X$, resulting from the occurrence of the symbol $\omega\in W$, are then computed by
\begin{align}
\label{eq:rho}
{\rho}(\omega)= \hat{\rho}_\omega({\chi(\omega)})
\end{align}
where  according to (\ref{eq:chi_defs})
\begin{align}
\label{eq:chi}
{\chi}(\omega) = \{\xi; \exists \xi^\prime, (\xi,\omega,\xi^\prime)\in\Delta\},
\end{align}
with $\omega=w|_{[t,t]}$. For sequences, we get analogously
\begin{align}
\label{eq:rho_recursive}
\rho(w|\taut) = \hat\rho_{w(t)}(\chi(w|\taut)).
\end{align}
Moreover, by definition \eqref{eq:chi_defs}
\begin{align}
\label{eq:chi_recursive}
\chi(w|_{[\tau,t]})=   \rho(w|_{[\tau,t-1]})\cap \chi(w({t})),
\end{align}
\end{subequations}
which along with \eqref{eq:rho_recursive} reveals a recursive structure in computing $\chi(w|\taut)$ and $\rho(w|\taut)$.


\section{Decentralized computation scheme}
\label{sec:distributed}

\subsection{Signal space decomposition}
\label{sec:abstraction}
Consider the external signal space $W$, as defined in Section~\ref{sec:monolithic}, and introduce a finite set of \emph{aggregation functions}
\begin{equation}
\label{eq:aggs}
\mathcal{A}_k: W\to V_k,~~k\in\{1,2,\ldots,p\}
\end{equation}
where $|V_k|\leq |W|$. The functions (\ref{eq:aggs}) are required to fulfill the following \emph{resolvability} or \emph{consistency} condition
\begin{align}
\label{eq:consistency}
\cap_{k=1}^{p} \mathcal{A}_k^{-1}(\mathcal{A}_k(\omega))=\omega~~~(\omega\in W)
\end{align}
where the inverse mapping $\mathcal{A}^{-1}_k:V_k\to 2^W$, is defined as
\begin{align}
\mathcal{A}_k^{-1}({\theta_k}):=\{\omega\in W; \mathcal{A}_k(\omega)={\theta_k} \}.
\end{align}
Due to consistency, each symbol $\omega \in W$ is uniquely resolved by an ordered $p$-tuple $({\theta_1},\ldots, {\theta_p})$, where ${\theta_k}=\mathcal{A}_k(\omega)$, $k\in\{1,\ldots,p\}$. We refer to this as a \emph{decomposition} of the original signal space $W$ into $p$ signal spaces, and write
\begin{align}\label{eq:decomposition}
W \leadsto V_1\times V_2\times \ldots \times V_p.
\end{align}
In general, the opposite does not hold: not every $p$-tuple in $V_1\times V_2\times \ldots \times V_p$ will be associated with a symbol in $W$.

Now extend the definition of aggregation functions to $\ak: W^\tautn \to V_k^\tautn$, by a symbolwise mapping. That is,  $v_k|\taut=\ak(w|\taut)$ if $v_k(l)=\ak(w(l))~(\tau\leq l \leq t)$. Then, it is clear that the resolvability condition \eqref{eq:consistency} carries over to strings, as well, that is
\begin{align}
\label{eq:consistency_seq}
\cap_{k=1}^{p} \mathcal{A}_k^{-1}(\ak(w|\taut))=w|\taut ~~(w\in W^\nato).
\end{align}

%

\subsection{Distributed state machines}
\label{sec:distributed_estimation}
Having introduced the signal spaces $V_k$, $k\in\{1,\dots,p\}$, each is now associated with a \emph{distributed state machine} $P_k=(X,V_k,\Delta_k,X)$, where $\Delta_k\hspm\subseteq \hspm X\hspm\times\hspm V_k \hspm\times\hspm X$ is defined as
\begin{equation}\label{eq:delta_k}
\Delta_k =\{(\xi,{\theta_k},\xi^\prime) ; \exists \omega\hspm\in\hspm \mathcal{A}_k^{-1}({\theta_k}),
(\xi,\omega,\xi^\prime)\hspm\in\hspm\Delta\}.
\end{equation}
The original state machine $P=(X,W,\Delta,X)$ is referred to as the \emph{monolithic} state machine. From (\ref{eq:delta_k}) it follows for the full and external behavior of a machine $P_k$: $\mathcal{B}_{\text{s},k}:=\{(v_k,x); v_k=\ak(w), (w,x)\in\mathcal{B}_{\text{s}}\}$ and $\mathcal{B}_{\text{ex},k}=\ak(\bex):=\{\ak(w); w\in\bex\}$, respectively. The definitions for the estimation and prediction functions $\chi_{k}: V_k^{\tautn} \to 2^X$ and $\rho_k:V_k^{\tautn} \to 2^X$, $k\in\{1,\dots,p\}$, are analogous to those in (\ref{eq:chi_defs}-\ref{eq:rho_defs}) for the monolithic machine $P$. Equivalently stated:
\begin{align}
\label{eq:chirho_maps}
\rho_k=\rho\circ\akl,~~\chi_k=\chi\circ\akl.
\end{align}
For instance,
\begin{align}
\label{eq:chis_k}
\rho_k(v_k|\taut) = \rho(\akl(v_k|\taut)) =\cup_{l=1}^{s_k}\rho(w_{kl}|\taut),
\end{align}
for some $s_k\in \mathbb{N}$. 
The external behavior $\bexk$ of the machine $P_k$ is a coarse approximation of the ``monolithic'' behavior $\bex$, in that $w|\taut\in \akl(v_k|\taut)$ if $v_k=\ak(w)$, for any $w\in\bex$. However, the resolvability condition $\cap_{k=1}^p\akl(v_k|\taut) = w|\taut$ suggests a computation scheme with $p$ distributed state machines $P_k$ including an intersection of the respective outcomes. Therefore, consider a string $w|\taut$ corresponding to the $p$-tuple $(v_1|\taut,\dots, v_p|\taut)$. Then, in general, it follows
\begin{subequations}
\begin{align}
\cap_{k=1}^{p}\chi_k(v_k|_{[\tau,t]})&= \cap_{k=1}^{p} \chi(\akl(v_k|\taut))\notag \\
{} & \hspace{.1cm}{\supseteq \chi(\cap_{k=1}^{p}\akl(v_k|\taut))}\notag \\
{} & \hspace{.1cm}{=\chi(w|\taut)},
\end{align}
where we use the fact: $\phi(M_1)\cap \phi(M_2)\supseteq \phi(M_1\cap M_2)$, and the consistency condition. Similarly,
\begin{align}
\label{eq:thm2:2}
\cap_{k=1}^{p}\rho_k(v_k|\taut)& \supseteq \rho(w|\taut).
\end{align}
\end{subequations}
As a consequence, the parallel computation scheme provides, in general, an overapproximation  of the outcomes of the original monolithic state machine. However, for certain classes of transition relations $\Delta$, specific consistent functions $\ak, ~k\in\{1,\dots,p\}$ can be constructed, which lead to exact computation results in the decentralized scheme. The basic concept which we use in constructing such aggregation functions is that of ``non-deterministic chains''.

\subsection{Non-deterministic chains}
\label{sec:Chain-structured state machines}
\begin{definition}\label{def:chain1}
Consider a state machine $P\hspm=\hspm(X,W,\Delta,X)$, and let $\Omega\subseteq W$. A transition subrelation $\delta\subseteq \Delta$ is said to be a \emph{non-deterministic chain} over $\Omega$ if for all $\xi\in X$ and $\omega', \omega'' \in \Omega$:
\vspace{-6pt}
\begin{align*}
\text(i)~ & (\xi,\omega',\xi')\in\delta, (\xi,\omega'',\xi'')\in\delta\Rightarrow \omega'=\omega'',\\
\text(ii)~ & (\xi',\omega',\xi)\in\delta, (\xi'',\omega'',\xi)\in\delta\Rightarrow \xi'=\xi''.
\end{align*}
\end{definition}
The transition relation $\delta$ can naturally be associated with the functions $\chi^c : \Omega\to 2^ X$, $\rho^c : \Omega\to 2^X$, and the transition function $\hat\rho^c: \chi(\Omega)\to 2^X$, defined as
\begin{align}\label{eq:ndc_funs}
\chi^c :=\chi|{_{\Omega}},~~\rho^c :=\rho|{_{\Omega}},~~\hat\rho^c:=\hat\rho_\Omega.
\end{align}
Then, \emph{(i)} can be equivalently restated as $\chi^c(\omega')\cap\chi^c(\omega'')\neq\emptyset\Rightarrow \omega'=\omega''$,  while \emph{(ii)} is equivalent to $\hat\rho^c(\xi')\cap\hat\rho^c(\xi'')\neq\emptyset\Rightarrow \xi'_1=\xi''$, leading to the following statement.
\begin{proposition}\label{def:chain2}
A transition relation $\delta\subseteq \Delta$ is a non-deterministic chain if and only if $\chi^c$ and $\hat\rho^c$ are absolutely injective set-valued maps.
\end{proposition}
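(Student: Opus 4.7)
The plan is to prove the biconditional by unpacking both definitions symbol-by-symbol, since the statement is essentially a tautological reformulation once one fixes what ``absolutely injective set-valued map'' means. I will take this to be: a set-valued map $f:A\to 2^B$ is absolutely injective if $f(a')\cap f(a'')\neq\emptyset$ implies $a'=a''$ (equivalently, the images of distinct points are disjoint). I will first state this definition explicitly, and note that $\chi^c$ and $\hat\rho^c$ are to be interpreted as the functions induced by the subrelation $\delta$ itself, so that $\chi^c(\omega)=\{\xi;\exists\xi',(\xi,\omega,\xi')\in\delta\}$ and $\hat\rho^c(\xi')=\{\xi;\exists\omega\in\Omega,(\xi',\omega,\xi)\in\delta\}$.

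For the forward direction, I would split into the two conditions. Assume $\delta$ is a non-deterministic chain. To show $\chi^c$ is absolutely injective, I pick $\xi\in\chi^c(\omega')\cap\chi^c(\omega'')$ with $\omega',\omega''\in\Omega$; unpacking the definition produces $\xi',\xi''$ with $(\xi,\omega',\xi'),(\xi,\omega'',\xi'')\in\delta$, and condition (i) then forces $\omega'=\omega''$. For $\hat\rho^c$, I pick $\xi\in\hat\rho^c(\xi')\cap\hat\rho^c(\xi'')$; by definition there exist $\omega',\omega''\in\Omega$ with $(\xi',\omega',\xi),(\xi'',\omega'',\xi)\in\delta$, and condition (ii) forces $\xi'=\xi''$.

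The reverse direction is the dual manipulation. Assume $\chi^c$ and $\hat\rho^c$ are absolutely injective. Given $(\xi,\omega',\xi'),(\xi,\omega'',\xi'')\in\delta$, the first transition places $\xi\in\chi^c(\omega')$ and the second places $\xi\in\chi^c(\omega'')$, so $\chi^c(\omega')\cap\chi^c(\omega'')\ni\xi$, and absolute injectivity of $\chi^c$ yields $\omega'=\omega''$, which is exactly (i). Dually, given $(\xi',\omega',\xi),(\xi'',\omega'',\xi)\in\delta$, both $\hat\rho^c(\xi')$ and $\hat\rho^c(\xi'')$ contain $\xi$, so absolute injectivity of $\hat\rho^c$ gives $\xi'=\xi''$, which is (ii).

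There is no real obstacle in the argument; the only subtlety is making sure the reader understands what ``absolutely injective'' means for a set-valued map (ordinary pointwise injectivity is strictly weaker) and keeping the definitions of $\chi^c$ and $\hat\rho^c$ tied to the subrelation $\delta$ rather than the ambient $\Delta$. I would therefore insert one clarifying line introducing the terminology and then present the four short implications above in two grouped paragraphs, one per direction.
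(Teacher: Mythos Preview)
Your proposal is correct and matches the paper's approach exactly: the paper does not give a formal proof but simply observes, in the sentence immediately preceding the proposition, that condition~(i) is equivalent to $\chi^c(\omega')\cap\chi^c(\omega'')\neq\emptyset\Rightarrow\omega'=\omega''$ and condition~(ii) to $\hat\rho^c(\xi')\cap\hat\rho^c(\xi'')\neq\emptyset\Rightarrow\xi'=\xi''$, which is precisely the definitional unpacking you carry out in detail. Your remark about tying $\chi^c$ and $\hat\rho^c$ to $\delta$ rather than the ambient $\Delta$ is a useful clarification; in the paper's intended application (the partition $\Delta=\cup_j\delta_j$ with $\delta_j$ consisting of all transitions labeled in $\Omega_j$) the two readings coincide.
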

In the sequel, we introduce a systematic method for signal space decomposition which results from partitioning the transition relation $\Delta$ into a finite set of non-deterministic chains. To this end, for a given monolithic machine $P=(X,W,\Delta,X)$, suppose that partitionings
\begin{align}\label{eq:decomW}
W={\cup}_{j=1}^{r} \Omega_j,~~\text{and}~~\Delta = {\cup}_{j=1}^{r} \delta_j
\end{align}
exist, such that each transition subrelation $\delta_j\subseteq \Delta$ over $\Omega_j$ represents a non-deterministic chain. Then, we claim that the state machine $P$ is \emph{chain-decomposable}. The following example illustrates this idea.

\emph{Example 1:}
Consider an I/S/- machine $P=(X,U\times Y,\Delta,X)$ with singleton output maps. Let $U=\{\mu_j; j=1,\dots,r\}$, and introduce the partitioning $W=U\times Y=\cup_{j=1}^r \Omega_j$, where $\Omega_j := \{\mu_j\}\times Y$. This induces a partitioning of the transition relation $\Delta=\cup_{j=1}^r  \delta_j$. By definition, functions $f\hspm :\hspm X\hspm \times\hspm  U\hspm \to\hspm  2^X$ and $h\hspm :\hspm X\hspm \times\hspm  U\hspm \to\hspm  Y$ exist, such that
\begin{align}\label{eq:isminus}
\hspm(\xi,(\mu_j,\nu),\xi^\prime)\in\delta_j \Leftrightarrow \xi'\hspm\in \hspm f(\xi,\mu_j),~\nu\hspm =\hspm h(\xi,\mu_j).
\end{align}
Then, each state $\xi\in X$ can be associated with a unique symbol pair $(\mu_j,\nu)\in \Omega_j$. Hence, $(i)$ in Definition~\ref{def:chain1} is fulfilled. Define further $\hat\rho^c_j:X \to 2^X$ as $\hat\rho^c_j(\xi):=f(\mu_j,\xi)$, and let it be absolutely injective. Then, $\delta_j$ is a non-deterministic chain for all $j\in\{1,\dots,r\}$.~\QED

Next, introduce a consistent signal space decomposition as discussed in Section~\ref{sec:abstraction}, for $\Omega_j$, $j\in\{1,\dots,r\}$
\begin{align}
\label{eq:decomp_chains}
\Omega_j \leadsto V_{j,1}\times\dots\times V_{j,p}.
\end{align}
Notice that such decompositions invariably yield a consistent decomposition  for the original signal space $W\leadsto V_1\times\dots\times V_p$ if, \eg
\begin{align}\label{eq:Qk_chains}
V_k = {\cup}_{j=1}^{r} V_{j,k},
\end{align}
The state machines $P_k$, $k\in\{1,\ldots,p\}$ are then easily constructed using the procedure described in Section~\ref{sec:distributed_estimation}. We now want to show that the intersection of their estimates and predictions produces the exact outcomes of the underlying monolithic state machine $P$.

Referring to \eqref{eq:ndc_funs} and Proposition~\ref{def:chain2}, it is important to keep in mind, that the absolute injectivity of
\begin{align}\label{eq:ndc_funs_j}
\chi^c_{\hspu j}:=\chi|{_{\Omega_{\hspu j}}},~~\rho^c_{\hspu j}:=\rho|{_{\Omega_{\hspu j}}},~~\hat{\rho}^c_{\hspu j}:=\hat\rho_{\Omega_{\hspu j}},
\end{align}
is, per construction, preserved in all disjoint subspaces $\Omega_j$. Note also that for any $t\in\nato$, all elements of the inverse mapping of a symbol $\theta_k\in V_k$ belong to the same subspace $\Omega_j$ for each $k\in\{1,\ldots,p\}$ and some $j\in\{1,\ldots,r\}$. Now, fix a string $w|\taut$ and consider the corresponding tuple $(v_1|\taut,\dots, v_p|\taut)$ in the decentralized scheme. Then,
\begin{align*}
\cap_{k=1}^{p}\rho_k(v_k|\taut)&= \cap_{k=1}^{p}\cup_{l_k=1}^{s_k}\rho(w_{kl_k}|\taut)\\
{} & \hspace{-2cm}{= \cup_{l_1=1}^{s_1}\cdots \cup_{l_p=1}^{s_p} \cap_{k=1}^p\rho(w_{kl_k}|_{[\tau,t]})}\\
{} & \hspace{-2cm}{= \cup_{l_1=1}^{\bar{s}_1}\cdots \cup_{l_p=1}^{\bar{s}_p} \cap_{k=1}^p\rho( w_{kl_k}|_{[\tau,t-1]}w(t))}
\end{align*}
where $\bar{s}_k\leq s_k$ is the number of all strings in $W^{\taut}$ of the form $w_{kl_k}|_{[\tau,t-1]}w(t)$, $k\in\{1,\dots,p\}$. In the $3^\text{rd}$ line we took advantage of the consistency condition and the fact that $\rho^c_j: \Omega_j \to 2^{X}$ is absolutely injective. All the strings $\akl(v_k|\taut)$ that do not end with $w(t)$ are then neglectable. Moreover, using the recursive formula \eqref{eq:chi_recursive}, and the injectivity of the transition function $\hat\rho^c_j$,
\begin{align*}
\cap_{k=1}^{p}\rho_k(v_k|\taut)&=  \\
{} & \hspace{-2cm}{= \cup_{l_1=1}^{\bar{s}_1}\cdots \cup_{l_p=1}^{\bar{s}_p} \cap_{k=1}^p\hat{\rho}^c_{j}\left[{\chi}^c_j(w(t))\cap\rho(w_{kl_k}|_{[\tau,t-1]})\right]}\\
{} & \hspace{-2cm}{= \hat{\rho}^c_{j}\left[{\chi}^c_j(w(t))\cap \left(\cup_{l_1=1}^{\bar{s}_1}\cdots \cup_{{l}_p=1}^{\bar{s}_p} \cap_{k=1}^p \rho(w_{kl_k}|_{[\tau,t-1]})\right)\right]}\\
{} & \hspace{-2cm}{= \hat{\rho}^c_{j}\left[{\chi}^c_j(w(t))\cap \left(\cap_{k=1}^{p} \rho_k(v_k|_{[\tau,t-1]})\right)\right]}.
\end{align*}
As a consequence, a recursive expression is obtained, which transfers the computation task from the interval $[\tau,t]$ to $[\tau,t-1]$. Repeating the recursion until $[\tau,\tau]$, leads to an expression on the right-hand side equal to $\rho(w|\taut)$, which is the proof of the following main statement.
\begin{theorem}\label{thm:main_2}
Let $P\hspm\hspm=\hspm\hspm(X,W,\Delta,X)$ be chain decomposable. Then, a decentralized scheme including the state machines $P_k$, $k\in\{1,\dots,p\}$ induced by (\ref{eq:Qk_chains}) provides exact state estimates and predictions, that is
\vspace{-2mm}
\begin{subequations}
\begin{align}
\label{eq:thm3:2}
\cap_{k=1}^{p}\chi_k(v_k|_{[\tau,t]})& = \chi(w|_{[\tau,t]}),\\
\label{eq:thm3:1}
\cap_{k=1}^{p}\rho_k(v_k|_{[\tau,t]})& = \rho(w|_{[\tau,t]}).
\end{align}
\end{subequations}
\end{theorem}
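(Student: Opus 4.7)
The plan is to prove the prediction identity \eqref{eq:thm3:1} by induction on the length of the observation window $t-\tau$, and then to obtain the state estimation identity \eqref{eq:thm3:2} as an immediate corollary via the one-step relation \eqref{eq:chi_recursive}.

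For the base case $t=\tau$, I would unfold $\rho_k(v_k(\tau))$ using \eqref{eq:chirho_maps} and \eqref{eq:rho} as $\hat\rho_{\akl(v_k(\tau))}(\chi(\akl(v_k(\tau))))$. Since $w(\tau)$ belongs to a unique chain subspace $\Omega_j$ and, by the construction of the $V_k$ via \eqref{eq:Qk_chains}, all preimages $\akl(v_k(\tau))$ sit inside the same $\Omega_j$, I can invoke Proposition~\ref{def:chain2}: the absolute injectivity of $\chi^c_j$ and $\hat\rho^c_j$ allows the intersection to be pushed through both set-valued maps. Combined with symbolwise consistency \eqref{eq:consistency_seq}, which yields $\cap_k \akl(v_k(\tau)) = w(\tau)$, this reduces $\cap_k \rho_k(v_k(\tau))$ to $\hat\rho^c_j(\chi^c_j(w(\tau))) = \rho(w(\tau))$.

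For the inductive step, the manipulation displayed in the excerpt immediately before the theorem already performs the essential bookkeeping: expand each $\rho_k$ via \eqref{eq:chis_k}, interchange union and intersection, discard inverse-image strings that do not terminate in $w(t)$ (legitimate by consistency applied symbolwise at time $t$), and then exploit the factorization $\rho(w|\taut) = \hat\rho^c_j[\chi^c_j(w(t)) \cap \rho(w|_{[\tau,t-1]})]$ obtained by chaining \eqref{eq:rho_recursive} with \eqref{eq:chi_recursive}. The injectivity property \eqref{eq:ndc_funs_j} of $\hat\rho^c_j$ and $\chi^c_j$ within the common chain subspace lets me pull these maps outside the intersection over $k$, yielding $\cap_k \rho_k(v_k|\taut) = \hat\rho^c_j[\chi^c_j(w(t)) \cap (\cap_k \rho_k(v_k|_{[\tau,t-1]}))]$. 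The inductive hypothesis replaces the inner bracketed term by $\rho(w|_{[\tau,t-1]})$, and the factorization above closes the step. Identity \eqref{eq:thm3:2} then follows by intersecting the distributed analog of \eqref{eq:chi_recursive} over $k$ and combining \eqref{eq:thm3:1} with the base-case argument applied to $\chi_k(v_k(t))$.

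The main obstacle is the commutation of set-valued maps with intersections. Generically only $\phi(A \cap B) \subseteq \phi(A) \cap \phi(B)$ holds, and this strict inclusion is precisely what causes the conservative overapproximation \eqref{eq:thm2:2} in the general decentralized setting. Equality is recovered by the absolute injectivity furnished by Proposition~\ref{def:chain2}, and its valid invocation hinges on a detail that must be threaded through the recursion: at every time instant, all preimages $\akl(v_k(s))$ for $k=1,\ldots,p$ must live in a single chain subspace $\Omega_j$. This is exactly what \eqref{eq:Qk_chains} and the disjointness of the $\Omega_j$ guarantee, and carefully verifying that this property is maintained at each recursion level is the principal bookkeeping burden of the proof.
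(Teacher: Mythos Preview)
Your proposal is correct and follows essentially the same route as the paper's own argument: the derivation displayed immediately before the theorem \emph{is} the paper's proof, and you have merely recast its ``repeating the recursion until $[\tau,\tau]$'' as an explicit induction on $t-\tau$, with the same expansion via \eqref{eq:chis_k}, the same interchange of union and intersection, the same discarding of preimage strings not ending in $w(t)$ via consistency and the injectivity of $\rho^c_j$, and the same factoring of $\hat\rho^c_j$ and $\chi^c_j$ outside the intersection. Your derivation of \eqref{eq:thm3:2} from \eqref{eq:thm3:1} via the distributed analog of \eqref{eq:chi_recursive} is likewise the intended route.
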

\vspace{2mm}
Following the discussion in \emph{Example~1} for I/S/- state machines, we further conclude.
\begin{corollary}[I/S/-]\label{cor:main_iso}
Consider the class of I/S/- realizations $P=(X,U\times Y,\Delta,X)$, which fulfills (\ref{eq:isminus}). Introduce a partitioning of $\Delta$ as described in \emph{Example~1}. Then, a decentralized scheme built upon any consistent decomposition with $V_k = U\times \ak(Y)$, $k\in\{1,\dots,p\}$, provides exact computation results of the form (\ref{eq:thm3:2}) and (\ref{eq:thm3:1}).
\end{corollary}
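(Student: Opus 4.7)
The plan is to derive Corollary~\ref{cor:main_iso} directly from Theorem~\ref{thm:main_2} by verifying its two hypotheses for this subclass of realizations: first, that the partitioning of $\Delta$ from \emph{Example~1} is indeed a non-deterministic chain decomposition; and second, that any consistent decomposition of the form $V_k=U\times\ak(Y)$ refines the block partitioning $W=\cup_j\Omega_j$ in the sense of \eqref{eq:decomp_chains}--\eqref{eq:Qk_chains}. Once both facts are in place, \eqref{eq:thm3:2} and \eqref{eq:thm3:1} follow immediately from the theorem.

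For the first item, I would simply reuse the observation already outlined in \emph{Example~1}. Fixing $\Omega_j=\{\mu_j\}\times Y$, the I/S/- structure \eqref{eq:isminus} implies that the output map $h(\cdot,\mu_j)$ is single-valued, hence every $\xi\in X$ is compatible with the unique symbol $(\mu_j,h(\xi,\mu_j))\in\Omega_j$. This gives absolute injectivity of $\chi^c_j=\chi|_{\Omega_j}$, i.e.\ condition~(i) of Definition~\ref{def:chain1}. Condition~(ii) is precisely the absolute injectivity of $\hat\rho^c_j=f(\cdot,\mu_j)$, which is the standing assumption inherited from Example~1. Proposition~\ref{def:chain2} then yields that each $\delta_j$ is a non-deterministic chain, so $P$ is chain-decomposable.

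For the second item, I would exploit the fact that the form $V_k=U\times\ak(Y)$ forces every aggregation function $\ak$ to act as the identity on $U$ while aggregating only on $Y$. Consequently, the inverse image $\akl(v_k)=\{\mu\}\times\{\nu\in Y;\ak(\mu,\nu)=v_k\}$ of any $v_k=(\mu,\theta_k)\in V_k$ sits entirely inside the single block $\Omega_j$ with $\mu_j=\mu$. Setting $V_{j,k}:=\{\mu_j\}\times\ak(Y)$ therefore produces block-wise consistent decompositions $\Omega_j\leadsto V_{j,1}\times\cdots\times V_{j,p}$, inherited from the assumed consistency of the $\ak$ on $Y$, and $V_k=\cup_{j=1}^r V_{j,k}$ holds by construction, matching \eqref{eq:Qk_chains} verbatim.

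The delicate point, and where I expect the main obstacle to reside, is exactly this non-mixing property: the recursive argument behind Theorem~\ref{thm:main_2} applies the injectivity of $\chi^c_j$ and $\hat\rho^c_j$ \emph{within} a single chain, so it is essential that the strings in $\akl(v_k|_{[\tau,t]})$ never straddle distinct chains $\Omega_j$ and $\Omega_{j'}$. Here the non-mixing is automatic, because the input symbol is preserved by every aggregation and therefore identifies the chain index $j$ in every coarse symbol. With both hypotheses verified, Theorem~\ref{thm:main_2} applies and delivers \eqref{eq:thm3:2} and \eqref{eq:thm3:1}.
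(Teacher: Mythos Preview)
Your proposal is correct and follows exactly the route the paper intends: the paper offers no explicit proof beyond the sentence ``Following the discussion in \emph{Example~1} for I/S/- state machines, we further conclude,'' and your argument simply spells out the two verifications (chain-decomposability from Example~1, and that $V_k=U\times\ak(Y)$ respects the block structure \eqref{eq:decomp_chains}--\eqref{eq:Qk_chains}) needed to invoke Theorem~\ref{thm:main_2}. The ``non-mixing'' observation you highlight is precisely the content of the remark preceding the theorem's proof that ``all elements of the inverse mapping of a symbol $\theta_k\in V_k$ belong to the same subspace $\Omega_j$,'' so nothing is missing.
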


\emph{Example 2:}
In order to illustrate the proposed method, consider the automaton in Fig.\,\ref{fig:des}. Its external signal space can be
\begin{wrapfigure}{l}{30mm}
\vspace{-15pt}
\begin{center}
\includegraphics[width=29mm]{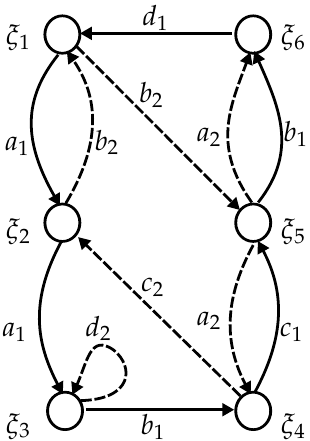}
\vspace{-17pt}
\caption{\hspm\hspm\hspm Finite\,state\,machine.}
\label{fig:des}
\vspace{-10pt}
\end{center}
\end{wrapfigure}
partitioned as $W=\Omega_1\cup\Omega_2$ with $\Omega_1=\{a_1,b_1,c_1,d_1\}$ and $\Omega_2=\{a_2,b_2,c_2,d_2\}$. As indicated by the solid and dashed lines, the corresponding transition relations  $\delta_1$ and $\delta_2$ are both non-deterministic chains. 
According to the previous elaborations, any consistent set of aggregation functions can be applied on $\Omega_1$ and $\Omega_2$. \Eg a particular signal space decomposition results from
%
%
\begin{align*}
V_{1,1} =\{\theta^1_1,\theta^2_1\} & \text{ with } \theta^1_1 \gets \{a_1,b_1\}, \theta^2_1 \gets \{c_1,d_1\}, \\
V_{1,2} =\{\theta^1_2,\theta^2_2\} & \text{ with } \theta^1_2 \gets \{a_1,c_1\}, \theta^2_2 \gets \{b_1,d_1\}, \\
V_{2,1} =\{\theta^3_1,\theta^4_1\} & \text{ with } \theta^3_1 \gets \{a_2,b_2\}, \theta^4_1 \gets \{c_2,d_2\}, \\
V_{2,2} =\{\theta^3_2,\theta^4_2\} & \text{ with } \theta^3_2 \gets \{a_2,c_2\}, \theta^4_2 \gets \{b_2,d_2\}.
\end{align*}
%
%
The corresponding decomposition reads $W\leadsto V_1\times V_2$, where $V_1=V_{1,1}\cup V_{2,1}$ and $V_2=V_{1,2}\cup V_{2,2}$, leading to the distributed machines $P_1=(X,V_1,\Delta_1,X)$ and $P_2=(X,V_2,\Delta_2,X)$. Now assume that the original system accepts a string, say$w|\taut=  a_1 b_2 $. The corresponding estimate of the monolithic machine is $\chi(a_1 b_2)=\xi_2$. The distributed machines measure accordingly the strings $v_1|_{[0,1]}=\theta^1_1 \theta^3_1 $ and $v_2|_{[0,1]}=\theta^1_2 \theta^4_2$, providing the estimates $\chi_1(\theta^1_1 \theta^3_1)=\xi_2$ and $\chi_2(\theta^1_2 \theta^4_2)=\{\xi_2,\xi_3\}$, respectively. The decentralized estimate is thus given by $\chi(\theta^1_1 \theta^3_1 )\cap\chi(\theta^1_2 \theta^4_2)=\xi_2$, which is exactly the same outcome obtained by the monolithic state machine $P=(X,W,\Delta,X)$. According to Theorem\,\ref{thm:main_2}, this must hold for all strings accepted by  $P$.~\QED

\section{Decentralized estimation using \\$\ell$-complete approximation}
\label{sec:Application}
This section addresses the set-valued state estimation for time-invariant systems $\Sigma=(\mathbb{N}_0,W,\mathcal{B})$ using the $\ell$-complete approximation algorithm. Note that $\Sigma$ is said to be $\ell$-complete, \cite{MoRaOY02}, if
\vspace{-4pt}
\begin{align}\label{eq:lcomp}
w\in \mathcal{B} \Leftrightarrow \sigma^t w|_{[t,t+\ell]}\in \mathcal{B}|_{[0,\ell]},~~(t\in \mathbb{N}_0).
\end{align}
%

\subsection{$\ell$-Complete approximation}
\label{sec:lComplete approximation}
Consider a time-invariant system $\Sigma$. The model $\Sigma_\ell=(\mathbb{N}_0,W,\mathcal{B}_\ell$), $\ell\in \mathbb{N}$, is a strongest $\ell$-complete approximation of $\Sigma$ if \emph{(i)} $\mathcal{B}_\ell$ is $\ell$-complete; \emph{(ii)} $\mathcal{B}_\ell\supseteq \mathcal{B}$, and \emph{(iii)} $\mathcal{B}_\ell^\prime \supseteq \mathcal{B}$, $\mathcal{B}_\ell^\prime$ being $\ell$-complete $\Rightarrow \mathcal{B}_\ell^\prime \supseteq \mathcal{B}_\ell$.
We consider here the realization algorithm for the strongest $\ell$-complete abstraction $\Sigma_\ell$ as follows (see also \cite{wodes2010}).
\begin{definition} \label{def:lcomplete} 
The state machine $P_\ell=(Z_\ell,W,\Delta_\ell,Z_0)$ is a realization of $\Sigma_\ell$ with
\vspace{-5pt}
\begin{align*}
\text(a)~ & Z_0= W; \nonumber\\
\text(b)~ & Z_\ell:=\cup_{r=1}^\ell W^r,~\text{where}~ W^r=w^{\alpha_1}w^{\alpha_2}\dots w^{\alpha_r}; \nonumber\\
\text(c)~ & \text{transition}~\Delta_\ell:=\cup_{r=0}^\ell\Delta_\ell^r\subseteq Z_\ell\times W\times Z_\ell,~\text{defined by:}\nonumber \\
{} & \Delta_\ell^r\hspm :=\hspm\hspm  \{(w|_{[0,r\hspu-\hspu 1]}\hspm, w(r), w|_{[0,r]})\hspm: \hspm w|_{[0,r]}\hspm \in\hspm \mathcal{B}_\ell|_{[0,r]}, 1\hspm \leq\hspm  r\hspm  <\hspm \ell\},\\
{} & \Delta_\ell^\ell\hspm:=\hspm \hspm \{(w|_{[0,\ell\hspu-\hspu 1]},w(\ell), w|_{[1,\ell]})\hspm:  \hspm w|_{[0,\ell]}\hspm \in\hspm \mathcal{B}_\ell|_{[0,\ell]}\}.
\end{align*}
\end{definition}

The $\ell$-complete representation keeps track of the system's past trajectories with a sliding time window of length $\ell$. 
The resulting behavior $\mathcal{B}_\ell$ includes the sequences compatible with  $\Sigma$, while $\mathcal{B}_0\supseteq \mathcal{B}_1\supseteq \cdots \supseteq\mathcal{B}$ holds. Using Definition~\ref{def:lcomplete}, the state $\zeta \in Z_\ell$ of $P_\ell$ at a time instant $t$ reads
\begin{align}\label{eq:zeta_mon}
   \zeta(t)=\left\{\begin{array}{ll}w|_{[0,t]} &\text{if}~\, 0\leq t< \ell,\\ w|_{[t-\ell+1,t]} &\text{if}~\, t\geq \ell.\\   \end{array} \right.
\end{align}
For $t< \ell$, the state $\zeta$ is determined by the whole signal string, whereas for $t\geq\ell$ only by its suffix of length $\ell$. Therefore, it is evident that each state $\zeta(t)$ can be associated with unique estimates $\chi(\zeta(t))\subseteq X$, defined as
\begin{align}
   \chi(\zeta(t)):=\left\{\begin{array}{ll} \chi(w|_{[0,t]}) &\text{if}~\, 0\leq t< \ell,\\ \chi(w|_{[t-\ell+1,t]}) &\text{if}~\, t\geq \ell.\\   \end{array} \right.
\end{align}
As a consequence of Proposition~\ref{thm:main_1}, in the latter case, due to $\chi(w|_{[0,t]})\subseteq \chi(w|_{[t-\ell-1,t]})$,
the state sets attached to $\zeta$, provide a coarse -- yet instantaneous -- estimation of the system realization $P=(X,W,\Delta,X)$ compatible with the measurement $w|_{[0,t]}$ for $t\geq \ell$. Another consequence of Proposition~\ref{thm:main_1} suggests that the estimation accuracy can be improved by increasing $\ell$. Indeed, given two approximation automata $P_{\ell_1}$ and $P_{\ell_2}$  with $\ell_2\geq \ell_1$, the estimates for $t \geq \ell_2$ correlate as $\chi(w|_{[t-\ell_2+1,t]}) \subseteq \chi(w|_{[t-\ell_1+1,t]})$.
Clearly, the estimation accuracy is improved, as $P_{\ell_2}$ stores a larger content of information (\ie\, a longer suffix) on $w|_{[0,t]}$. A major drawback is, however, that the number of states and transitions of the $\ell$-complete automaton increases exponentially with increasing  $\ell$, which provides a motivation for the decentralized approach.

For the distributed state machines $P_k\hspm=\hspm(X,V_k,\Delta_k,X)$, $k=1,\dots,p$, as introduced in Section~\ref{sec:distributed_estimation}, the $\ell$-complete automata $P_{k,\ell}=(Z_{k,\ell},V_k,\Delta_{k,\ell},Z_{0,k})$ are obtained from the algorithm in Definition~\ref{def:lcomplete}. Then, $Z_{k,\ell}$ includes states
\begin{align}\label{eq:zeta_dec}
   \zeta_k(t):=\left\{\begin{array}{ll}  v_k|_{[0,t]} &\text{if}~\, 0 \leq t <\ell,\\ v_k|_{[t-\ell,t]} &\text{if}~\, t\geq \ell,\\   \end{array} \right.
\end{align}
which store the estimates
\begin{align}
   \chi_k(\zeta_k(t)):=\left\{\begin{array}{ll} \chi_k(v_k|_{[0,t]}) &\text{if}~\, 0\leq t <\ell,\\ \chi_k(v_k|_{[t-\ell,t]}) &\text{if}~\, t\geq \ell.\\   \end{array} \right.
\end{align}
For a given string $w|\taut = (v_1|\taut,\dots, v_p|\taut)$, the monolithic approximation $P_\ell$ and each $P_{k,\ell}$ assume unique states $\zeta\in Z_{\ell}$ and $\zeta_k\in Z_{k,\ell}$, as defined above. Then, in accordance with the elaboration in Section~\ref{sec:distributed}:
\begin{displaymath}
\cap_{k=1}^{p}\chi_k(\zeta_k(t)) \supseteq \chi(\zeta(t))~\text{or}~\cap_{k=1}^{p}\chi_k(\zeta_k(t)) = \chi(\zeta(t)).
\end{displaymath}


\subsection{Space/time complexity}
\label{sec:complexity}
Following the discussion in the last section, it is obvious that the complexity of the monolithic and distributed estimators is directly related to the number of states in the corresponding implementation with $\ell$-complete automata. In accordance with (\ref{eq:zeta_mon}) and (\ref{eq:zeta_dec}), the number $|Z_{k,\ell}|$ of the states of automata with a memory depth $\ell$ refers to the number of substrings of length $\ell$. For instance, for a distributed automaton $P_{k,\ell}=(Z_{k,\ell},V_k,\Delta_k,Z_{0,k})$: $|Z_{k,\ell}|\leq \sum_{i=0}^\ell |V_k|^i$.
For comparison purposes, here, the maximal number of possible states in the automata implementation is considered. That is, the numbers $\sum_{k=1}^p\sum_{i=0}^\ell |V_k|^i$ (decentralized implementation) and $\sum_{i=0}^\ell |W|^i$ (monolithic implementation) need to be compared. As, by definition, $|V_k|<|W|$, it is obvious that for a sufficiently large $\ell$, the decentralized setup requires less memory space than the monolithic implementation. 

The memory requirements depend also significantly on the amount of the stored data $\chi_k(\zeta)$. For instance, for a finite state machine, a finite number of sets has to be stored. 
Similarly, for switched linear systems, the estimation sets are polytopes, which, again, can be stored as a finite number of sets representing the polytope vertices. 
%
%
%
%
Due to the exponential growth of the number of states with increasing $\ell$, the memory requirements can be significantly reduced by the decentralized approach at the price of additional online effort for the computation of set intersections. For certain classes of hybrid systems this operation can be implemented efficiently (such as polytope intersection in the case of switched linear systems). For other classes, the estimation sets can be offline overapproximated, e.g. by polyhedral methods, in order to obtain a computationally efficient online intersection operation. In general, a significant reduction in the offline computational time and required memory space is expected, as exemplified in the following section. 

%
%




\begin{figure}[!h]
\vspace{-10pt}
   \centering
   \includegraphics[width=0.4\textwidth]{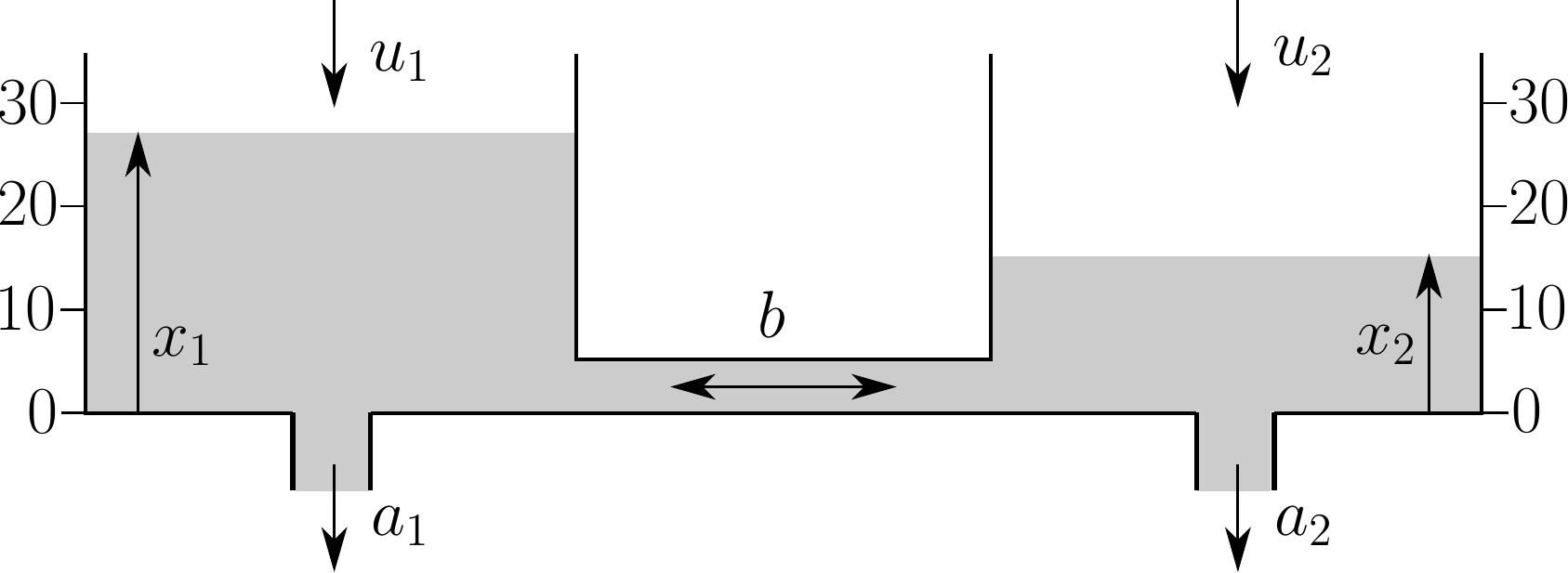}
\vspace{-10pt}
   \caption{Two-tank system.}
   \label{fig:tanks}
\vspace{-12pt}
\end{figure}

\subsection{Numerical example}
\label{sec:example}

%

The plant consists of two tanks with corresponding water levels $x_1$ and $x_2$, inflows $u_1$ and $u_2$, and outflow parameters $a_1$ and $a_2$, that are connected by a pipe with a flow constant $b$ (Fig.\,\ref{fig:tanks}). The dynamics can be described by the difference equations:
\begin{align}\label{eq:twotank}
	\hspm x(t\hspm+\hspm 1)&=\left[
\begin{array}{cc}
	\hspm 1\hspm-\hspm a_1\hspm-\hspm b\hspm\hspm & b\hspm\\
	\hspm b\hspm\hspm & 1\hspm-\hspm a_2\hspm-\hspm b\hspm
\end{array}\right]x(t)+ \left[
\begin{array}{cc}
	\hspm 1\hspm &\hspm\hspm 0\hspm \\
	\hspm 0\hspm & \hspm\hspm 1\hspm
\end{array}\right]u(t), 
\end{align}
where $t\in\mathbb{N}_0$, $x=[x_1,x_2]^T$, $u=[u_1,u_2]^T$, $y=[y_1,y_2]^T$, and $y(t)=x(t)$. In the following, the simulations will be performed for a set of input symbols $U_j\hspm=\hspm\{\mu_{j}^1, \hspm\mu_{j}^2, \hspm\mu_{j}^3\}$, where $\mu_{j}^1$, $\mu_{j}^2$, $\mu_{j}^3$ represent the inputs $u_{j}\hspm=\hspm 1$, $u_{j}\hspm=\hspm7$, $u_j\hspm=\hspm14$, respectively, with $j\in\{1,2\}$. Analogously, $Y_j\hspm=\hspm\{\nu_{j}^1, \hspm\nu_{j}^2, \hspm\nu_{j}^3\}$, where $\nu_{j}^1$, $\nu_{j}^2$, $\nu_{j}^3$ encode the measurements $y_{j}\hspm\in\hspm [0,10)$, $y_{j}\hspm\in\hspm [10,20)$, $y_{j}\hspm\in\hspm [20,30]$, respectively. Furthermore $a_1\hspm = \hspm a_2\hspm =\hspm 0.35$ and $b\hspm =\hspm 0.25$.

Note that this system assumes an infinite-dimensional I/S/O state machine representation $P=(X,W,\Delta,X)$, where the external signal space is given by $W=U_1\times U_2\times Y_1\times Y_2$. Hence, in accordance with \emph{Example~1}, only the output space is to be aggregated. For a decentralized setting including two state-machines, the resulting signal spaces of $P_{1,\ell}$ and $P_{2,\ell}$ are given by $V_1=U_1\times U_2  \times \mathcal{A}_1(Y_1\times Y_2)$ and $V_2=U_1\times U_2 \times \mathcal{A}_2(Y_1\times Y_2)$, respectively. Thereby, we adopt the mappings $\mathcal{A}_1$ and $\mathcal{A}_2$ according to $\{(\nu_{1}^i,\%)\}\mapsto\theta_{1}^i$ and $\{(\%,\nu_{2}^i)\}\mapsto\theta_{2}^i$, $i\in\{1,2,3\}$. As a result, $|V_1|+|V_2|=54$, while $|W|=81$, indicating a reduction in the computational complexity (see below).

The $\ell$-complete approximations of monolithic and distributed estimators with $\ell=2$ are simulated for $0\leq t\leq 7$ with the input sequence $u(t)$ as depicted in Fig.\,\ref{fig:sim} and $x(0)=[0, 0]^T$. Obviously, the reachable sets for an arbitrary sequence of input and measurement symbol pairs is represented by polytopes with a finite number of edges. As  system (\ref{eq:twotank}) fulfills the conditions of Corollary~\ref{cor:main_iso}, the decentralized estimation scheme must provide the exact outcome of the monolithic estimator. For instance, following \eqref{eq:zeta_mon} and \eqref{eq:zeta_dec}, at the time instant  $t=2$, the monolithic and distributed estimators reach the states
\begin{align*}
\zeta(2)&=\langle(\mu_{1}^2,\mu_{2}^2,\nu_{1}^1,\nu_{2}^1),(\mu_{1}^2,\mu_{2}^2,\nu_{1}^1,\nu_{2}^1),(\mu_{1}^2,\mu_{2}^2,\nu_{1}^2,\nu_{2}^2)\rangle, \\
\zeta_1(2)&=\langle(\mu_{1}^2,\mu_{2}^2,\theta_{1}^1),(\mu_{1}^1,\mu_{2}^2,\theta_{1}^1),(\mu_{1}^1,\mu_{2}^2,\theta_{1}^2)\rangle, \\
\zeta_2(2)&=\langle(\mu_{1}^2,\mu_{2}^2,\theta_{2}^1),(\mu_{1}^2,\mu_{2}^2,\theta_{2}^1),(\mu_{1}^2,\mu_{2}^2,\theta_{2}^2)\rangle,
\end{align*}
respectively, and the intersection of $\chi_1(\zeta_1(2))$ (the polygon with dashed edges) and $\chi_2(\zeta_2(2))$ (the polygon with dotted edges), is indeed equal to the set provided by the monolithic estimator (the gray area indicated by $\chi(\zeta(2))$), see Fig.\,\ref{fig:sim}.
%
%
Note that the true state of the system denoted by small filled circles is always contained in the corresponding state set.

Table~\ref{tab:twotankl} indicates that complexity of the estimators is strongly related to the number of states $|Z_\ell|$ of the $\ell$-complete automata used for realization of the approximations and the overall number of vertices of the associated sets $\chi(\zeta)$ given by $n_{\chi}$. Observe that, compared to the monolithic estimator, already for $\ell=2$, the decentralized setup with two estimators requires a lower offline computational cost $|Z_\ell|$, and amount of memory $n_\chi$.


\vspace{-5pt}
\begin{table}[!h]
	\centering
		\begin{tabular}{r|r@{  }r|r@{  }r|r@{  }r}
		\toprule
			$\ell$ & \multicolumn{2}{|c|}{$P_{\ell}$} & \multicolumn{2}{|c|}{$P_{1,\ell}$} & \multicolumn{2}{c}{$P_{1,\ell} \text{ \& } P_{2,\ell}$} \\
			& $|Z_\ell|$ & $n_{\chi}$ & $|Z_\ell|$ & $ n_{\chi}$  & $\Sigma |Z_\ell|$ & $ \Sigma n_{\chi}$ \\
			\midrule
			2& 172 & 993 & 54 & 307 & 108 & 617\\
			3& 2260 &17743 &703 &5509 &1406 & 11065\\
			\bottomrule
		\end{tabular}
	\caption{Complexity analysis.}
	\label{tab:twotankl}
\end{table}

\vspace{-23pt}
\section{Conclusions}
A general decentralized framework for set-valued state estimation and prediction for hybrid state machines has been discussed in this article. The outcome of the decentralized scheme is computed as the intersection of the sets provided by  the individual distributed state machines. The distributed state machines are defined as abstractions of the monolithic one, as specified by an appropriate decomposition of the original external signal space. In general, we show that decentralized schemes provide outer approximates of the estimates and predictions obtained by the monolithic computation.
Based on the concept of non-deterministic chains, we develop a simple decomposition algorithm for the external signal space, invariably leading to exact set-valued decentralized state estimation and prediction.
Due to the smaller cardinality of the constructed external signal spaces corresponding to the distributed state machines,
a significant reduction in the overall space/time computational complexity may result. This is illustrated here by applying $\ell$-complete approximation. Moreover, advantages in terms of robustness and reliability are gained. Indeed, in order to prevail the failure of a single processor, the decentralized scheme could be extended by a redundant state machine and a failure detection for switching off the malfunctioning processor. The proposed decentralized framework can be applied in different contexts, including data fusion, fault tolerant control, etc. Optimal signal space decomposition leading to a minimal computational complexity is a problem of interest for the future work. A further development of the algorithm based on non-deterministic chains can be found  in \cite{bajc_rom:2011}.


\begin{figure}[t]
\centering
\includegraphics[width=8.cm]{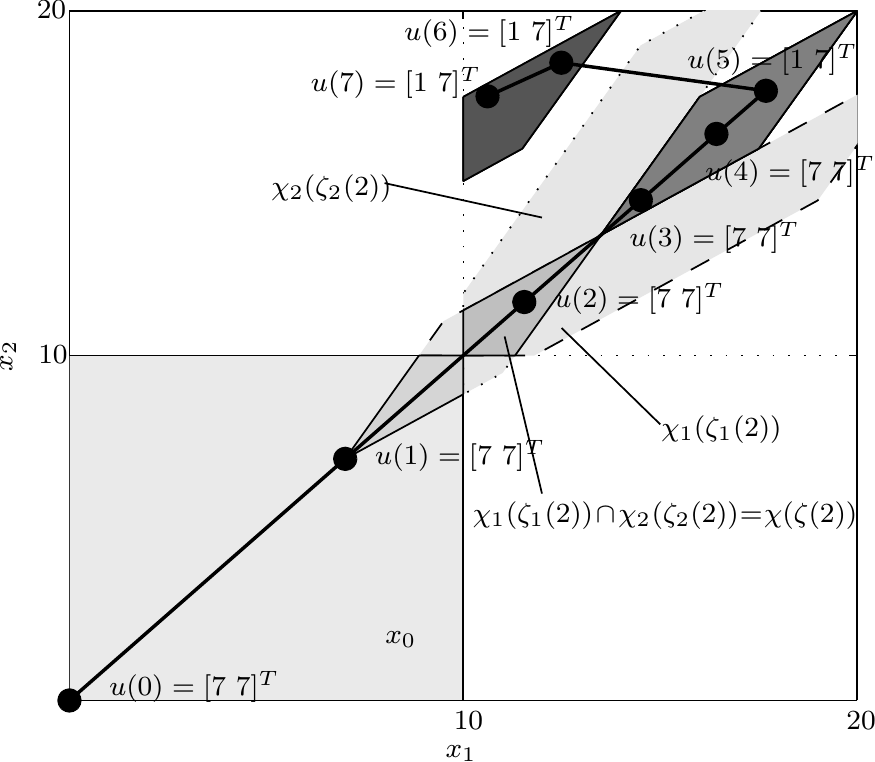}
\vspace{-10pt}
\caption{Set-valued state estimation using $\ell$-complete approximation ($\ell=2$).}
\label{fig:sim}
\vspace{-15pt}
\end{figure}


\bibliographystyle{IEEEtran}
\bibliography{bib/systol2010_literature}

\end{document}